\definecolor{emphcol}{gray}{.75}
\newcommand{\take}[1]{\colorbox{emphcol}{#1}}
\newcommand{\dmax}[1][\textbf{D}]{\|#1\|_{\infty}}
\begin{document}

\title{The curse of rationality in sequential scheduling games}

\author{Cong Chen\inst{1} \and Yinfeng Xu\inst{2}}

\institute{
School of Business Administration, South China University of Technology, Guangzhou, China \and
School of Management, Xi'an Jiaotong University, Xi'an, China
}
\maketitle

\begin{abstract}
	Despite the emphases on computability issues in research of algorithmic game theory,
	the limited computational capacity of players have received far less attention.
	This work examines how different levels of players' computational ability (or ``rationality'') impact the outcomes of \emph{sequential scheduling games}. Surprisingly, our results show that a lower level of rationality of players may lead to better equilibria.

	More specifically, we characterize the sequential price of anarchy (SPoA) under two different models of bounded rationality, namely, \emph{players with $k$-lookahead} and \emph{simple-minded players}. 
	The model in which players have $k$-lookahead interpolates between the ``perfect rationality'' ($k=n-1$) and ``online greedy'' ($k=0$).
	Our results show that the inefficiency of equilibria (SPoA) increases in $k$ the degree of lookahead: $\mathrm{SPoA} = O (k^2)$ for two machines and $\mathrm{SPoA} = O\left(2^k \min \{mk,n\}\right)$ for $m$ machines, where $n$ is the number of players.
	Moreover, when players are simple-minded, the SPoA is exactly $m$, which coincides with the performance of ``online greedy''.

\keywords{Scheduling game \and Subgame-perfect equilibrium \and Bounded rationality \and Sequential price of anarchy}
\end{abstract}

\section{Introduction}\label{sec:intro}

Research on algorithmic game theory -- a fascinating fusion of both game theory and algorithms -- has attracted a lot of computer scientists and economists.
The core of this research field is to take the computability (computational complexity) into consideration while studying game theory problems, such as the complexity of finding Nash equilibria and the computational issues in mechanism design.
However, the computational ability of players has received little attention from the community, despite its strong ties to computational complexity and the actual behavior of players playing a game.
Most research assumes the players always have the ability to compute an optimal decision, even though sometimes finding an optimal decision is a very difficult problem (e.g. NPC problem).
Our work examines the impact of different levels of computational ability (also termed as ``rationality'' in this paper) of players on the outcomes of \emph{sequential scheduling games}. Surprisingly, the results show that a lower level of rationality may produce better equilibrium outcomes.

\paragraph{Sequential scheduling game (on unrelated machines).}
There are $n$ jobs $N=\left\{ 1,2,\ldots,n \right\}$ as players and $m$ machines $M=\left\{ 1,2,\ldots,m \right\}$ as strategies.
Each job $j$ will take $p_{i,j}$ units of time if processed by machine $i$.
The jobs sequentially choose one of the machines for processing, starting with job $1$ and ending with job $n$.
The load of a machine is the total processing times of the jobs processed on it.
The goal of each job is to choose a machine with a smallest possible load.

When a job makes decision, he knows the choices made by his predecessors as well as the processing times of his successors.
However, it is very hard for the job to compute an optimal decision.
Indeed, \citet{Leme2012} showed that for the unrelated machine scheduling computing a \emph{subgame-perfect equilibrium} (SPE) is PSPACE-complete.
One can glance at the example shown in Table~\ref{tab:example5jobs} to see how hard to find the optimal decision (in gray box) for job 1, and how easily the job may deviate from his optimal choice (to choose machine 2 with a very small processing time $\epsilon$) without enough computational ability.
\begin{table}[tb]
	\caption{An example from \citet{Giessler2016} with 5 jobs and 2 machines, where the SPE is shown as gray boxes.}
	\label{tab:example5jobs}
	\centering

	\begin{tabular}{c|c| c| c | c | c}
	\hline

	\hline
	& job 1 & job 2 & job 3 & job 4 & job 5 \\
	\hline
	machine 1 & \take{$3-11\epsilon$} & $\epsilon$ & \take{$\epsilon$} & $1-2\epsilon$ & $2-8\epsilon$ \\
	\hline
	machine 2 & $\epsilon$ & \take{$2-9\epsilon$} & $2-8\epsilon$ & \take{$1-2\epsilon$} & \take{$1-2\epsilon$} \\ 
	\hline

	\hline
	\end{tabular}
\end{table}

\paragraph{Price of anarchy and the curse of rationality.}

The concept of the price of anarchy (PoA), proposed by \citet{Koutsoupias2009} to assess the inefficiency of equilibria outcomes, has attracted many research over the past two decades.
To further understand the quality of SPEs outcomes of a game, \citet{Leme2012} introduced the \emph{sequential price of anarchy} (SPoA).
While the PoA compares the cost of a worst case Nash equilibrium to the optimal social cost,
the SPoA considers the outcomes of a sequential game where players, instead of choosing their strategies simultaneously, choose their strategies sequentially in some arbitrary order.

It turns out the PoA is very bad (unbounded) for even two unrelated machines, and introducing sequentiality only slightly improves the outcomes -- the SPoA grows linearly with the number $n$ of players \citep{Giessler2016}.
However, when we look carefully	at the worst case scenario, which gives the lower bound of SPoA for two unrelated machines in \citet{Giessler2016}, we find that the equilibrium is very unnatural and can hardly be achieved in reality, unless each player can solve a PSPACE-complete problem while making decision.
The example in Table~\ref{tab:example5jobs} already reveals the phenomenon that the first two players have to make a very complex computation to counter-intuitively choose a machine with a very high processing time rather than the one with almost 0 processing time.

Perhaps surprisingly, instead of assuming all the players have such strong rationality, if players are myopic (i.e., decisions are made only based on the predecessors' decision), the SPoA will be significantly improved to 2 for two unrelated machines, where the result can be deduced from the online greedy scheduling problem \citep{Aspnes1997}.
This result illustrates that full rationality may have a negative effect on the quality of outcomes.
Our work mainly investigates the impact of different levels of rationality on the SPoA for the unrelated machines scheduling game.

\paragraph{Modeling the bounded rationality.}
The notion of \emph{bounded rationality} can be traced back to the pioneering work of \citet{simon1955behavioral}.
Herbert Simon defines bounded rationality as ``rational choice that takes account the cognitive limitations of the decision-maker -- limitations of both knowledge and computational capacity''.
Frank Hahn remarks that ``there is only one way to be perfectly rational, while there are an infinity of ways to be partially rational...'' \citep{lee2011bounded}.
Indeed, there are tons of literature tried to model the bounded rationality.
We refer the readers to some surveys (see, e.g., \citet{velupillai2010foundations,lee2011bounded,di2016boundedly}) for details.

This paper propose two ways to model the bounded rationality of players:
\begin{enumerate}
	\item \emph{Players with $k$-lookahead.} We suppose each player only considers the next few successors' information for computing his decision, in addition to the known predecessors' decisions.
	We say a player has a \emph{$k$-lookahead} ability if he can compute the optimal decision depending on his next $k$ successors' information and the predecessors' decisions.
	Specifically, when a player makes decision, he will draw a $(k+1)$-level game tree (including the node of himself), assign the corresponding costs to the leaves, and then perform backward induction to decide which move to make.
	Similar setting can also be found in \citet{Mirrokni2012,Bilo2017,Groenland2018,Kroer2020}.

	\item \emph{Simple-minded players.} 
	As an extension, we also examine a situation where players make decisions only via simple calculations.
	When a so-called simple-minded player makes decision, he simply assumes the successors will choose machines with minimum processing times, so he can easily find a best choice depending on the assumption.
	The setting is also very natural in the unrelated machine scheduling, since choosing a machine with minimum processing time is mostly not a bad idea, and assuming other players doing so makes the prediction of other players' behaviors much simpler.
\end{enumerate}

\paragraph{Our contributions.}
This paper mainly investigates how the degree of rationality impacts the efficiency of SPEs.
We characterize the SPoA under two different models of bounded rationality, namely, \emph{players with $k$-lookahead} and \emph{simple-minded players}.
In general, quantifying the SPoA is a challenging task, and no general techniques are known in the literature.
In this paper, the key idea of most of our proofs is to characterize the amount of increase of the makespan or load of machine due to an additional job or set of jobs.
Out main results are as follows (see also Table~\ref{tab:results}):
 \begin{enumerate}
 	\item In Section~\ref{sec:1lookahead}, we first show that for sequential scheduling game on 2 unrelated machines the SPoA is $2$ for players with $1$-lookahead, which coincides with the case of $0$-lookahead -- i.e., online greedy.
 	This result perhaps suggests that the strategic behavior of only one player foreseen does not bring any negative influence on the current decision-maker. 
 	However, we will show in the following that the interaction of more than 2 players (i.e., $2$-lookahead) may have a negative effect on the decision being made. 
	\item For the players with $k$-lookahead, we obtain that $\mathrm{SPoA} = O(k^2)$ for 2 unrelated machines.
	This shows that the more lookahead the players have the worst the SPoA will be.
	But if we compare this result to the ``perfect rationality'' case where the SPoA is $\Theta (n)$ \citep{Giessler2016}, bounded rationality significantly improves the quality of SPEs. 
	(These results are presented in Section~\ref{sec:klookahead}.)
	\item We also characterize the SPoA for general $m$ unrelated machines case.
	We prove that $\mathrm{SPoA} = O\left(2^k \min \{mk,n\}\right)$ for players with $k$-lookahead,
	which also improves the $O(2^n)$ upper bound for the perfect rationality case. (See Section~\ref{sec:klookahead}.)
	\item At last, another bounded rationality model where the players are \emph{simple-minded} is discussed.
	It turns out that if assuming all the predecessors follow a simple rule -- choosing the machine with minimum processing time -- the player will make a decision as good as the online greedy, that is, $\mathrm{SPoA} = m$.
	(The results can be found in Section~\ref{sec:simple_minded_players}.)
 \end{enumerate}

\begin{table}[tb]
  \centering
  \caption{A summary of results, some of which achieved in this paper are marked with ``$*$''}
    \begin{tabular}{c|c|c}
    \hline

    \hline
          & $2$ machines & $m$ machines \\
    \hline
    Online greedy ($0$-lookahead) & $2$     & $m$ \\
    \hline
    $1$-lookahead & $2$ $^*$     & $O(m)$ $^*$ \\
    \hline
    $k$-lookahead & $O(k^2)$ $^*$ & $O\left(2^k \min \{mk,n\}\right)$ $^*$ \\
    \hline
    Perfect rationality ($n$-lookahead) & $\Theta (n)$ & $O(2^n)$ \\
    \hline
    Simple-minded & $2$ $^*$ & $m$ $^*$ \\
    \hline

    \hline
    \end{tabular}%
  \label{tab:results}%
\end{table}%

\paragraph{Further related work.}
The idea of \emph{limited lookahead} first appeared in the 1950s \citet{shannon1950programming}.
Recently, the idea has been investigated in some game-theoretic setting by several research \citep{Mirrokni2012,Kroer2020,Bilo2017,Groenland2018}.
In particular, \citet{Bilo2017} and \citet{Groenland2018} have very similar setting to our \emph{$k$-lookahead} model.
However, they both focused on the congestion games.
\citet{Bilo2017} studied the existence of $k$-lookahead equilibria and the PoA for $2$-lookahead (corresponding to $1$-lookahead in our setting) equilibria in congestion games with linear latencies.
\citet{Groenland2018} focused on the equilibria which are not only SPEs but also Nash equilibria.
They show that for generic simple congestion games the SPoA coincides with the PoA (independently of k).
In fact, both of the above work failed to reveal what the role of $k$ plays in the game, which is distinguished in the unrelated machines scheduling game by our work.

The SPoA for unrelated machines was first analyzed by \citet{Leme2012}, showing that $n \le \mathrm{SPoA} \le m\cdot 2^n$.
The bounds are improved to $2^{\Omega(\sqrt{n})} \le  \mathrm{SPoA} \le 2^n$ by \citet{Bilo2015}.
However, the above lower bounds use a non-constant number of machines, which means it is still unclear whether the lower bound is constant for constant number of machines.
\cite{Giessler2016} answered the question, showing that the SPoA is not constant for even two machines, that is, $\mathrm{SPoA} = \Omega (n)$.
They also provided a matching upper bound, concluding that $\mathrm{SPoA} = \Theta (n)$ for two unrelated machines.

\section{Preliminaries}\label{sec:notation}
We formally define the \emph{sequential scheduling game on unrelated machines} and our two models of bounded rationality -- \emph{players with $k$-lookahead} and \emph{simple-minded players}.

\paragraph{Sequential scheduling game on unrelated machines.}
Let $\left[ a:b \right] = \{a,a+1,\ldots,b\}$ and $\left[ b \right] = \left[ 1:b \right]$ where $a,b \in \mathbb{N}$.
Unrelated machine scheduling can be defined as a tuple $(N,M,(p_{i,j})_{i\in N, j\in M})$, 
where $N = [n]$ is the a set of jobs/players, $M=[m]$ is the set of machines/strategies, and $p_{i,j}$ is the processing time of job $j$ on machine $i$.
In sequential scheduling game, the jobs sequentially choose one of the machines for processing, starting with job $1$ and ending with job $n$.
A schedule $\sigma = (\sigma_1, \sigma_2, \ldots , \sigma_n)$ represents the decisions of the jobs, where $\sigma_j$ is the machine which job $j$ chooses.
The \emph{load} $L_i(N)$ of a machine $i$ in schedule $\sigma$ of jobs $N$ is the sum of the processing times of all jobs who choose machine $i$, that is, $L_i(N) = \sum_{j: \sigma_j=i}p_{i,j}$.
When job $j$ makes decision, he will try to minimize his own cost $L_{\sigma_j}(N)$ -- the load of machine he chooses -- taking into account all his predecessors and successors.
The schedule $\sigma$ is thus decided.
This is an \emph{extensive form game}, and so it always possesses \emph{subgame-perfect equilibria}, 
which can be calculated by backward induction.

Figure~\ref{fig:example3jobs} gives an example of 3 jobs, in which the ``perfect rationality'' part depicts the game tree for this example.
In this game, job 1 has to draw the whole tree, calculate the costs at each of the $2^3=8$ leaves, and find the best choice by backward induction.
The following jobs will also go through an associated subtree in a similarly fashion.
The bold lines show the subgame-perfect strategies, and the (unique) path from the root to the leaf corresponding to the black circle is the equilibrium solution (i.e., the schedule is $(2,1,1)$).

\begin{figure}[tb]
	\centering
	\includegraphics[width=1\textwidth]{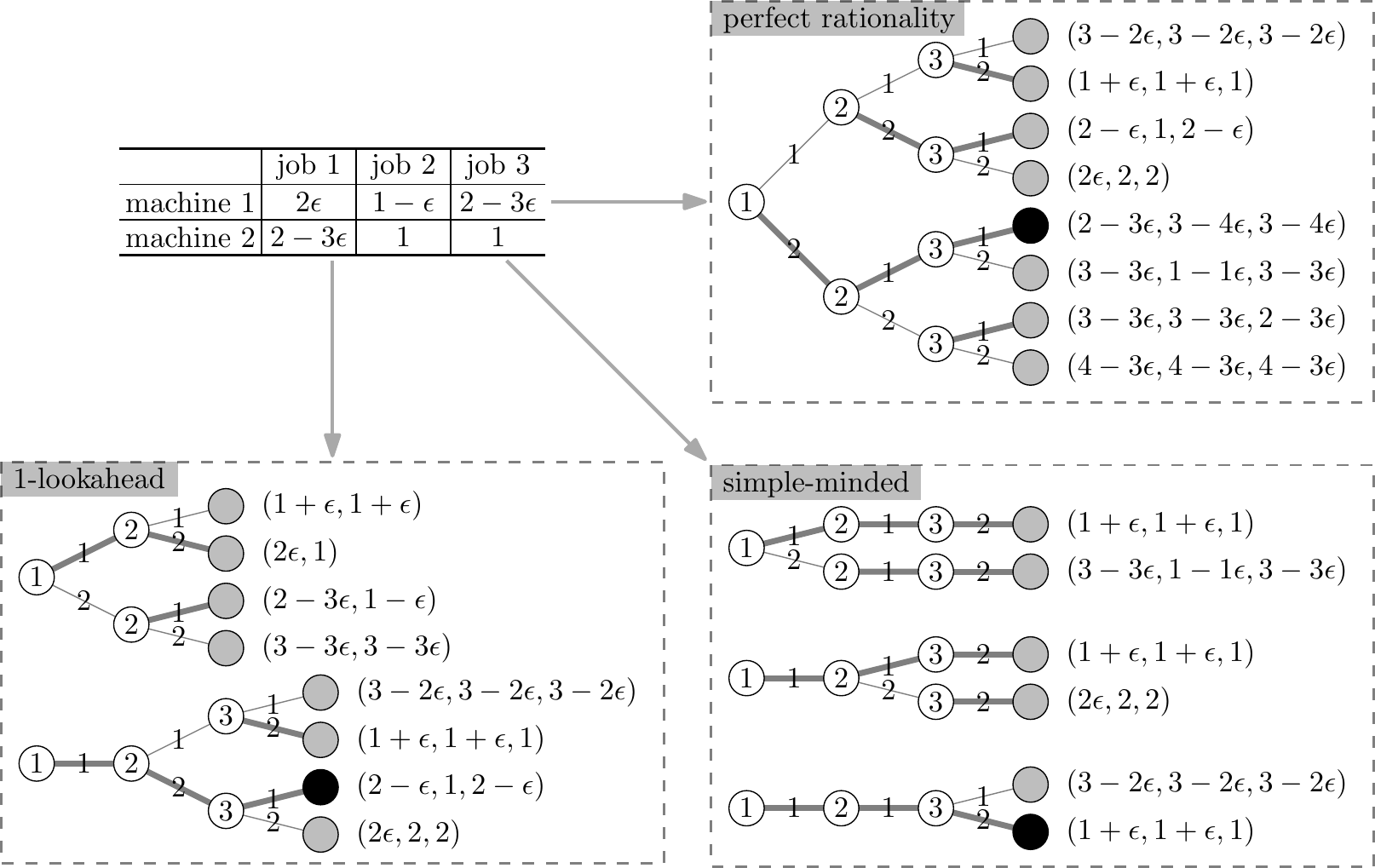}
	\caption{An example of 3 jobs and 2 machines.}
	\label{fig:example3jobs}
\end{figure}

\paragraph{Players with k-lookahead.}
In this model, we suppose each player can only foresee the next $k$ players.
Let $K_j = \left[ j+1:j+k \right]$ be the lookahead set for job $j$, where $|K_j| = k$.
When a player makes decision, he needs to draw a $(k+1)$-level game tree (consists of himself and the successors $K_j$), 
calculate the costs for this tree (with only $2^k$ leaves),
and find the best choose by backward induction.

The ``1-lookahead'' part in Figure~\ref{fig:example3jobs} gives an example of 3 jobs when players have 1-lookahead.
In this game, job 1 only knows the information of the next job (i.e., job 2), and considers the job as the last job.
After job 1 makes decision by backward induction, job 2 will compute a best choose according to job 1's decision and the next job's information.
As shown in the figure, the SPE for this example is $(1,2,1)$.

\paragraph{Simple-minded players.}
Simple-minded players ignore the strategical behaviors of their successors, and simply assume the successors will choose machines with minimum processing times.
According to this, the players can directly calculate the load of each machine and find a best one.

In Figure~\ref{fig:example3jobs}, the ``simple-minded'' part illustrates the decision process for each player.
It shows that each player just needs to calculate the costs for only 2 leaves, and selects one machine with a lower cost.
The resulting SPE for this example is $(1,1,2)$.

\paragraph{Inefficiency of equilibria.}
The (social) cost of a schedule/equilibrium is often defined as the \emph{makespan}, the maximum load over all machines.
To quantify the inefficiency of SPEs, \citet{Leme2012} introduced the sequential price of anarchy (SPoA) which compares the worst SPE with the optimal social cost:
\[
	\mathrm{SPoA} = \sup_J \frac{L_{\max}(J)}{OPT(J)},
\]
where $J$ takes over all possible job sets, and $OPT(J)$ is the makespan of the optimal schedule (a schedule generated by a central authority to minimize the social cost, and is not necessary an equilibrium) for jobs $J$.

For example, see the instance in Figure~\ref{fig:example3jobs}. 
The optimal schedule is $(1,1,2)$ with a makespan of $1+\epsilon$.
Since the SPE under perfect rationality is $(2,1,1)$ with a makespan of $3- 4 \epsilon$,
the SPoA for this example is $3$ (taking $\epsilon \rightarrow 0$).
However, when players have only 1-lookahead, the SPoA is $2$ (taking $\epsilon \rightarrow 0$).
Surprisingly, the SPE generated by simple-minded players is exactly the optimal schedule, that is, $\mathrm{SPoA} = 1$.

\paragraph{Additional notation.}
We introduce a notation of initial load $\mathbf{D} = (D_1,D_2, \ldots ,D_m)$ on the machines,
meaning the machines already have some initial load before the jobs playing a game.
Thus, $L_i(\mathbf{D},J)$ is the load of machine $i$ after the set $J$ of jobs play sequentially on the machines with a initial load $\mathbf{D}$,
and $L_{\max}(\mathbf{D},J) = \max_{i \in M} L_i(\mathbf{D},J)$ is the corresponding makespan. 
Notice that when $J = \emptyset$, $L_{\max}(\mathbf{D},\emptyset) = \dmax{}$.
Sometimes we use $\mathbf{D}(\ell{}) = (D_1(\ell{}),D_2(\ell{}), \ldots ,D_m(\ell{}))$ to represent the load of each machine due to the first $\ell{} \in N$ jobs,
where $D_i(\ell{}) = \sum_{j\in\{j|\sigma_j = i, 1 \le j \le \ell\}} p_{i,j}$ for $i=1,\ldots,m$.

To denote the maximum possible increase of the makespan due to the set $J$ of jobs for any initial load $\mathbf{D}\in \mathbb{R}^M_+$, we define:
\[
	\Delta L(J)=\sup_{\mathbf{D}\in \mathbb{R}^M_+}\left\{ L_{\max}(\mathbf{D},J)- \dmax{} \right\} \,.
\]
For each specific machine, we also define:
\[
	\Delta L_i(\mathbf{D},J) = L_i(\mathbf{D},J) - \dmax{} \,.
\]

For simplicity, we let $p_j = \min_{i \in M} p_{i,j}$ be the minimum processing time of job $j$,
and $x_{i,j}$ represent whether job $j$ chooses machine $i$ in the sequential game, i.e.,
\[
	x_{i,j}= 
	\begin{cases}
	1, &\text{ if job $j$ chooses machine $i$, that is, $\sigma_j = i$; }\\
	0, &\text{ otherwise. }
	\end{cases}
\]

\section{Players with 1-lookahead on two unrelated machines}\label{sec:1lookahead}
In this section, we analyze the SPoA for two unrelated machine when players has 1-lookahead.
We first prove a main lemma showing that the makespan is bounded by the total minimum processing time:
\begin{lemma}\label{lem.1look}
	$L_{\max}(\mathbf{0},N) \le \sum_{j\in N} p_j$.
\end{lemma}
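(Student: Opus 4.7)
The plan is to prove, by induction on $|N|$, the strengthened statement
\[
L_{\max}(\mathbf{D}, N) \le \|\mathbf{D}\|_{\infty} + \sum_{j \in N} p_j
\]
for every initial load $\mathbf{D} \in \mathbb{R}^M_+$; specializing to $\mathbf{D} = \mathbf{0}$ recovers the lemma. The base cases $|N| \in \{0,1\}$ are immediate: with no jobs we have equality; with a single job the $1$-lookahead subgame collapses to a $0$-lookahead greedy choice $a_1 = \arg\min_i (D_i + p_{i,1})$, which yields $L_{a_1} \le D_m + p_1 \le \|\mathbf{D}\|_{\infty} + p_1$ (where $m$ denotes job $1$'s min machine).

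For the inductive step with $|N| \ge 2$, let $j_1$ be the first job and $j_2$ its successor appearing in $j_1$'s $1$-lookahead subgame; let $\mathbf{D}'$ denote the loads after $j_1$'s move. I split on whether $j_1$ chooses its minimum-processing-time machine. In the easy case $p_{a_{j_1}, j_1} = p_{j_1}$, we have $\|\mathbf{D}'\|_{\infty} \le \|\mathbf{D}\|_{\infty} + p_{j_1}$, and the inductive hypothesis applied to the remaining $|N|-1$ jobs (whose $1$-lookahead structure is unchanged, since each of them still sees its own immediate successor) closes the argument.

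The crux is the non-min case, where $p_{a_{j_1}, j_1} = p_{j_1} + e_{j_1}$ with excess $e_{j_1} > 0$. Here the $1$-lookahead optimality condition $C_{a_{j_1}} \le C_{m}$ (with $m$ the min machine for $j_1$) can be expanded over the two possible responses of $j_2$ in the subgame, and in either sub-case yields an inequality of the form $e_{j_1} \le (D_m - D_{a_{j_1}}) + p_{m, j_2}$. I would then case-analyze $j_2$'s actual move and apply the inductive hypothesis on $N \setminus \{j_1, j_2\}$ starting from the resulting loads $\mathbf{D}''$.

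The main obstacle I anticipate is the mismatch between $j_1$'s prediction for $j_2$ (which treats $j_2$ as the last player) and $j_2$'s actual behavior (which factors in $j_3$ via its own $1$-lookahead): because of this, a naive two-step bound $\|\mathbf{D}''\|_{\infty} \le \|\mathbf{D}\|_{\infty} + p_{j_1} + p_{j_2}$ can already fail in the non-min case. What saves the global argument is that any excess $e_{j_1}$ introduced on the chosen machine is ultimately absorbed by the contribution of a later job to the other machine through the $1$-lookahead chain; making this precise will likely require a refined inductive invariant (for instance, controlling the smaller of the two loads against the cumulative excess of non-min choices) rather than a clean step-by-step induction.
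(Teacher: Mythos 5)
Your setup (strengthening to $L_{\max}(\mathbf{D},N)\le \dmax+\sum_{j\in N}p_j$ and inducting) is sound, and you have correctly located the crux: in the non-min case the excess $p_{a_{j_1},j_1}-p_{j_1}$ cannot be charged to $j_1$ alone, and the naive two-step bound fails because $j_2$'s actual behavior need not match $j_1$'s prediction. But the proposal stops exactly there: you say the fix ``will likely require a refined inductive invariant'' without producing one, and the two-job window ($j_1,j_2$, then recurse on $N\setminus\{j_1,j_2\}$) is structurally too short. The cascade can be arbitrarily long: job $j_1$ declines the unambiguously better machine because it expects $j_2$ to take it, $j_2$ declines for the same reason expecting $j_3$, and so on; no fixed-length window of jobs absorbs the excess.

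The paper's proof supplies precisely the two ingredients you are missing. First, it partitions the sequence into maximal blocks (Algorithm~\ref{alg:setn}): if at the start of a block machine $1$ is the better choice for the current job \emph{regardless of the successor's move}, the block runs until the first job that actually takes machine $1$, and every intermediate job takes machine $2$. A separate argument shows each block terminates by job $n$ (the last job, having nothing to foresee, must take the better machine). Second, within a block it introduces the potential $\Phi(n') = \max\bigl\{ D_1(n_v)+p_{1,n'+1},\, D_2(n_v)+\sum_{j=n_v+1}^{n'}p_{2,j} \bigr\}$ — note the first term carries the \emph{anticipated} load on machine $1$ if the next job were to move there, not the realized load — and uses the two lookahead inequalities (``$n'$ believes $n'+1$ will take machine $1$'' and ``machine $2$ is then better for $n'$'') to prove $\Phi(n')\le \Phi(n'-1)+p_{n'+1}$, which telescopes to the block-level bound $\dmax[\textbf{D}(n_{v'})]-\dmax[\textbf{D}(n_{v})]\le\sum_{j=n_v+1}^{n_{v'}}p_j$. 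This block decomposition plus the anticipated-load potential is the ``refined invariant'' you gesture at; without it (or an equivalent device) your induction does not close, so as written the proposal has a genuine gap at its central step.
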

\begin{proof}
	Since $L_{\max}(\mathbf{0},N) = \dmax[\textbf{D}(n)]{}$ by definition, we just prove that $\dmax[\textbf{D}(n)] \le \sum_{j\in N} p_j$.
	First, we will define a set of $\{n_0,n_1,\ldots,n_u\}$ where $n_\ell \in \{0,1,\ldots,n\}$ for $\ell = 0,1,\ldots,u$ and $n_u = n$.
	Then we prove a claim that $\dmax[\textbf{D}(n_\ell{})] \le \sum_{j=1}^{n_\ell{}}p_j$ for $\ell = 0,1,\ldots,u$, which indicates $\dmax[\textbf{D}(n)] \le \sum_{j\in N} p_j$ and proves this lemma.
	
	For a given set $N$ of jobs, their processing times $p_{i,j}$ and decisions $x_{i,j}$ in the sequential game,
	we define a set of $\{n_0,n_1,\ldots,n_u\}$ by Algorithm~\ref{alg:setn}. 

	We next show that the two ``while loops'' in Algorithm~\ref{alg:setn} (Line 5 and 8) will end at some $v \le n$.
	In other words, for example, if $D_1(v-1) + p_{1,v} \le D_2(v-1) + p_{2,v}$ (the first ``while loop''), there must be some $v'$ ($v \le v' \le n$) that $x_{1,v'} = 1$. 
	We take the first ``while loop'' as an example and the analysis for the second one is similar.
	When $D_1(v-1) + p_{1,v} \le D_2(v-1) + p_{2,v}$, machine 1 is a better choice for player $v$ regardless the decision of next player $v+1$.
	If player $v$ chooses machine 1 (i.e., $x_{1,v} = 1$), the loop ends.
	However, if player $v$ chooses machine 2 (i.e., $x_{1,v} = 0$), the only reason is that player $v$ knows player $v+1$ will choose machine 1 and $D_1(v-1) + p_{1,v} + p_{1,v+1} \ge D_2(v-1) + p_{2,v}$. 
	What makes layer $v$ believe player $v+1$ will choose machine 1 is that $D_1(v-1) + p_{1,v} + p_{1,v+1} \le D_2(v-1) + p_{2,v+1}$.
	Therefore, for player $v+1$ (after player $v$ has made his/her decision), machine 1 is a better choice regardless the decision of next player $v+2$.
	Similarly, if player $v+1$ chooses machine 2 (i.e., $x_{1,v+1} = 0$), it holds that player $v+1$ believes the next player $v+2$ will also choose machine 1.
	In a similar fashion, we know that if player $v''$ chooses machine 2 (i.e., $x_{1,v''} = 0$), player $v''+1$ will choose machine 1 regardless the decision of next player $v''+2$.
	Therefore, when player $v''+1$ is the last player (i.e. $v''+1 = n$), player $v''+1$ will surely choose machine 1 (i.e., $x_{1,v''+1} = 1$) and the loop ends.

	\begin{algorithm}[b]
		\KwIn{$p_{i,j}$ and $x_{i,j}$ for $i=1,2$ and $j = 1,\ldots,n$.}
		\KwOut{$\{n_0,n_1,\ldots,n_u\}$.}
		$u = 0$; $v = 1$; $D_1(0) = 0$; $D_2(0) = 0$; $n_0 = 0$\;
		$D_i(\ell{}) = \sum_{1 \le j \le \ell} p_{i,j} \cdot x_{i,j}$ for $\ell = 1,\ldots,n$ and $i=1,2$\;
		\While{$v \le n$}{
			\eIf({($n_u$ is the next $v$ that $x_{1,v} == 1$)}){$D_1(v-1) + p_{1,v} \le D_2(v-1) + p_{2,v}$}
			{
				\lWhile{$x_{1,v} == 0$}{$v$++}
				$u$++; $n_u=v$; $v$++\;
			}
			({($n_u$ is the next $v$ that $x_{2,v} == 1$)})
			{
				\lWhile{$x_{2,v} == 0$}{$v$++}
				$u$++; $n_u=v$; $v$++\;
			}
		}
		\caption{Definition of $\{n_0,n_1,\ldots,n_u\}$}
		\label{alg:setn}
	\end{algorithm}

	Given the set $\{n_0,n_1,\ldots,n_u\}$ where $n_0 = 0$ and $n_u = n$, we claim that:
	\begin{claim}\label{claim:1look}
		$\dmax[\textbf{D}(n_\ell{})] \le \sum_{j=1}^{n_\ell{}}p_j$ for $\ell = 0,1,\ldots,u$.
	\end{claim}
	We prove the claim by induction on $\ell = 0,1,\ldots,u$.
	The base case ($\ell = 0$) is trivial, since $\dmax[\textbf{D}(0)] = 0$ and $\sum_{j=1}^{0}p_j = 0$.
	Assume $\dmax[\textbf{D}(n_\ell{})] \le \sum_{j=1}^{n_\ell{}}p_j$ holds for $\ell = v$. We then prove that 
	\begin{equation}\label{eq:1look.induction}
		\dmax[\textbf{D}(n_{v'})] \le \sum_{j=1}^{n_{v'}}p_j \,,
	\end{equation}
	where $v'=v+1$.
	Since the first $n_v$ jobs create load $\textbf{D}(n_{v})$ on the machines (with a makespan $\dmax[\textbf{D}(n_{v})]$), 
	we only need to show that the increment of makespan after the allocation of job $\{n_v+1,n_v+2,\ldots,n_{v'}\}$ is no greater than the total minimum processing times of jobs $\{n_v+1,n_v+2,\ldots,n_{v'}\}$, namely,
	\[
		\dmax[\textbf{D}(n_{v'})] - \dmax[\textbf{D}(n_{v})] \le \sum_{j=n_v +1}^{n_{v'}}p_j \,.
	\]
	Therefore, we focus on the subgame played by players $\{n_v+1,n_v+2,\ldots,n_{v'}\}$.
	Without loss of generality, we consider the case ${D}_1(n_{v}) + p_{1,n_{v}+1} \le {D}_2(n_{v}) + p_{2,n_{v}+1}$ (the proof for the other case ${D}_1(n_{v}) + p_{1,n_{v}+1} > {D}_2(n_{v}) + p_{2,n_{v}+1}$ is similar).
	According to Algorithm~\ref{alg:setn}, jobs $\{n_v+1,n_v+2,\ldots,n_{v'}-1\}$ choose machine 2 and job $n_{v'}$ chooses machine 1 (as shown in the following Table~\ref{tab:1look.induction} where gray boxes represent the choices). 
	We know that
	\begin{equation}\label{eq.1look.claim.main}
		\dmax[\textbf{D}(n_{v'})] = \max \left\{ {D}_1(n_{v}) + p_{1,n_{v'}}\,,~{D}_2(n_{v}) + \sum_{j=n_v+1}^{n_{v'}-1}p_{2,j} \right\} \,.
	\end{equation}
	\begin{table}[h]
		\caption{Decisions of jobs $\{n_v+1,n_v+2,\ldots,n_{v'}\}$}
		\label{tab:1look.induction}
		\centering	
		\begin{tabular}{l|c|ccccc}
		\hline	

		\hline	
		machine 1 & ${D}_1(n_{v})$ & $p_{1,n_{v}+1}$ & $p_{1,n_{v}+2}$ & \ldots & $p_{1,n_{v'}-1}$  & \take{$p_{1,n_{v'}}$} \\
		\hline
		machine 2 & ${D}_2(n_{v})$ & \take{$p_{2,n_{v}+1}$} & \take{$p_{2,n_{v}+2}$} & \ldots & \take{$p_{2,n_{v'}-1}$}  & $p_{2,n_{v'}}$ \\
		\hline

		\hline
		\end{tabular}
	\end{table}

	In this case (${D}_1(n_{v}) + p_{1,n_{v}+1} \le {D}_2(n_{v}) + p_{2,n_{v}+1}$), we first give a upper bound on ${D}_1(n_{v}) + p_{1,n_{v}+1}$ prepared for the following proof.
	Since
	\[
		{D}_1(n_{v}) + p_{1,n_{v}+1} \le {D}_2(n_{v}) + p_{2,n_{v}+1} \le \max \left\{ {D}_1(n_{v}) ,~ {D}_2(n_{v}) \right\} + p_{2,n_{v}+1} 
	\]
	and
	\[
		{D}_1(n_{v}) + p_{1,n_{v}+1} \le \max \left\{ {D}_1(n_{v}) ,~ {D}_2(n_{v}) \right\} + p_{1,n_{v}+1} \,,
	\]
	we obtain that
	\begin{align}\label{eq:1look.claim.base}
		{D}_1(n_{v}) + p_{1,n_{v}+1} & \le \max \left\{ {D}_1(n_{v}) ,~ {D}_2(n_{v}) \right\} + \min \left\{p_{1,n_{v}+1},~ p_{2,n_{v}+1} \right\} \nonumber\\
		 & = \dmax[\textbf{D}(n_{v})] + p_{n_{v}+1} \,.
	\end{align}

	We then analyze properties that holds for the subgame.
	According to the decisions of players, we know that for any $n' \in \{n_v+1, n_v+2, \ldots , n_{v'}-1\}$, the reason why player $n'$ choose machine 2 is that he/she believes player $n'+1$ will choose machine 1 if he/she choose machine 1, i.e.,
	\begin{equation}\label{eq:1look.claim.eq1}
		{D}_1(n_{v}) + p_{1,n'} + p_{1,n'+1} \le {D}_2(n_{v}) + \sum_{j=n_v+1}^{n'-1}p_{2,j} + p_{2,n'+1}
	\end{equation}
	Thus, machine 2 is a better choice for player $n'$, i.e.,
	\begin{equation}\label{eq:1look.claim.eq2}
		{D}_2(n_{v}) + \sum_{j=n_v+1}^{n'}p_{2,j} \le {D}_1(n_{v}) + p_{1,n'} + p_{1,n'+1} \,.
	\end{equation}
	Define a function
	\[
		\Phi(n') = \max \left\{ {D}_1(n_{v}) + p_{1,n'+1}\,,~{D}_2(n_{v}) + \sum_{j=n_v+1}^{n'}p_{2,j} \right\} \,.
	\]
	Due to \eqref{eq:1look.claim.eq2}, we have
	\begin{align}
		\Phi(n') & \le \max\left\{ {D}_1(n_{v}) + p_{1,n'+1}\,,~{D}_1(n_{v}) + p_{1,n'} + p_{1,n'+1}  \right\} \nonumber\\
		\label{eq:1look.claim.eq3-1}
		&  = {D}_1(n_{v}) + p_{1,n'} + p_{1,n'+1} \\
		\label{eq:1look.claim.eq3-2}
		& \le \max \left\{ {D}_1(n_{v}) + p_{1,n'} \,,~ {D}_2(n_{v}) + \sum_{j=n_v+1}^{n'-1}p_{2,j} \right\} + p_{1,n'+1} \,.
	\end{align}
	Substituting \eqref{eq:1look.claim.eq1} into \eqref{eq:1look.claim.eq3-1} yields
	\begin{align}\label{eq:1look.claim.eq4}
		\Phi(n') & \le {D}_2(n_{v}) + \sum_{j=n_v+1}^{n'-1}p_{2,j} + p_{2,n'+1} \nonumber\\
		& \le \max \left\{ {D}_1(n_{v}) + p_{1,n'} \,,~ {D}_2(n_{v}) + \sum_{j=n_v+1}^{n'-1}p_{2,j} \right\}  + p_{2,n'+1} \,.
	\end{align}
	According to \eqref{eq:1look.claim.eq3-2} and \eqref{eq:1look.claim.eq4}, we get a critical inequality of $\Phi(n')$:
	\begin{align}\label{eq:1look.claim.eq5}
		\Phi(n') & \le \max \left\{ {D}_1(n_{v}) + p_{1,n'} \,,~ {D}_2(n_{v}) + \sum_{j=n_v+1}^{n'-1}p_{2,j} \right\}  + \min \left\{ p_{1,n'+1} ,~p_{2,n'+1} \right\} \\
		& = \Phi(n'-1) + p_{n'+1} \,. \nonumber
	\end{align}
	Therefore, we obtain that
	\[
		\Phi(n_{v'}-1) \le \Phi(n_{v'}-2) + p_{n_{v'}} \le \ldots \le \Phi(n_{v}+1) + \sum_{j=n_{v}+3}^{n_{v'}}p_{j} \,
	\]
	where
	\begin{align*}
		\Phi(n_{v}+1) & \le \max \left\{ {D}_1(n_{v}) + p_{1,n_{v}+1} \,,~ {D}_2(n_{v}) \right\}  + p_{n_{v}+2} & \text{by inequality \eqref{eq:1look.claim.eq5}} \\
		& \le \max \left\{ \dmax[\textbf{D}(n_{v})] + p_{n_{v}+1} \,,~ {D}_2(n_{v}) \right\}  + p_{n_{v}+2} & \text{by inequality \eqref{eq:1look.claim.base}} \\
		& = \dmax[\textbf{D}(n_{v})] + p_{n_{v}+1} + p_{n_{v}+2} \,.
	\end{align*}
	Therefore, it holds that
	\[
		\Phi(n_{v'}-1) \le \dmax[\textbf{D}(n_{v})] + \sum_{j=n_{v}+1}^{n_{v'}}p_{j} \,.
	\]
	Since \eqref{eq.1look.claim.main}, we have
	\[
		\dmax[\textbf{D}(n_{v'})] = \Phi(n_{v'}-1) \le \dmax[\textbf{D}(n_{v})] + \sum_{j=n_{v}+1}^{n_{v'}}p_{j} \,,
	\]
	which concludes the proof of Claim~\ref{claim:1look}, and therefore the lemma is proved.
	\qed
\end{proof}

\begin{theorem}
	For the sequential scheduling game on two unrelated machines where players have 1-lookahead, $\mathrm{SPoA} = 2$.
\end{theorem}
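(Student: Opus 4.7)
The plan is to obtain SPoA $=2$ by establishing a matching upper bound of $2$ (using Lemma~\ref{lem.1look}) and a lower bound of $2$ (using an explicit instance).

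For the upper bound, I would combine Lemma~\ref{lem.1look} with a trivial lower bound on $\mathrm{OPT}$ for two machines. Since every job $j$ must be processed on some machine in any schedule (including the optimum), it contributes at least $p_j = \min_{i} p_{i,j}$ to the total load $L_1^{\mathrm{OPT}} + L_2^{\mathrm{OPT}}$. Hence
\[
    L_1^{\mathrm{OPT}}(N) + L_2^{\mathrm{OPT}}(N) \;\ge\; \sum_{j\in N} p_j\,,
\]
which by pigeonhole yields $\mathrm{OPT}(N) \ge \tfrac{1}{2}\sum_{j\in N} p_j$. Together with Lemma~\ref{lem.1look}, which bounds the SPE makespan by $\sum_{j \in N} p_j$, this gives
\[
    \frac{L_{\max}(\mathbf{0},N)}{\mathrm{OPT}(N)} \;\le\; \frac{\sum_{j\in N} p_j}{\tfrac{1}{2}\sum_{j\in N} p_j} \;=\; 2\,,
\]
so $\mathrm{SPoA} \le 2$.

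For the lower bound, the 3-job, 2-machine instance already displayed in Figure~\ref{fig:example3jobs} suffices: it was noted in Section~\ref{sec:notation} that under $1$-lookahead this instance yields $\mathrm{SPoA} = 2$ as $\epsilon \to 0$. Thus $\mathrm{SPoA} \ge 2$, and combined with the upper bound gives $\mathrm{SPoA} = 2$.

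The main (and only real) obstacle here is Lemma~\ref{lem.1look} itself; once that bound on the SPE makespan in terms of the minimum processing times is available, the theorem follows from the elementary pigeonhole argument above plus the pre-existing instance. So the write-up is short: state the two ingredients, do the one-line calculation, and invoke Figure~\ref{fig:example3jobs} for tightness.
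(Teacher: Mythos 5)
Your proof is correct and follows essentially the same route as the paper: the upper bound is obtained identically by combining Lemma~\ref{lem.1look} with the pigeonhole bound $OPT(N) \ge \sum_{j\in N} p_j/2$, and the lower bound rests on an explicit tight instance (you cite the three-job example of Figure~\ref{fig:example3jobs}, whereas the paper exhibits a two-job instance, but this is an immaterial difference). No gaps.
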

\begin{proof}
	According to Lemma~\ref{lem.1look} and an obvious lower bound on the optimal cost, namely $OPT(N) \ge {\sum_{j\in N} p_j}/{2}$, we obtain
	\[
		\mathrm{SPoA} = \frac{L_{\max}(\mathbf{0},N)}{OPT(N)} \le \frac{\sum_{j\in N} p_j}{{\sum_{j\in N} p_j}/{2}} = 2 \,.
	\]
	We then introduce a game that shows $\mathrm{SPoA} \ge 2$.
	There are only two jobs in this game (as shown in Table~\ref{tab:1look.lb}).
	The first job has processing times $1+\epsilon$ and $1$ on machines 1 and 2, respectively,
	and the second job has processing times $2$ and $1+\epsilon$ on machines 1 and 2, respectively.

	\begin{table}[h]
		\caption{A game of two players}
		\label{tab:1look.lb}
		\centering
	
		\begin{tabular}{c|c|c}
		\hline

		\hline
		 & job 1 & job 2 \\
		\hline
		machine 1 & $1+\epsilon$ & \take{$2$} \\
		\hline
		machine 2 & \take{$1$} & $1+\epsilon$ \\
		\hline

		\hline
		\end{tabular}
	\end{table}

	In this sequential game job 1 will choose machine 2, and thus job 2 will choose machine 1.
	The resulting makespan is 2.
	However, the optimal makespan is $1+\epsilon$, where job 1 chooses machine 1 and job 2 chooses machine 2.
	Therefore, we have $\mathrm{SPoA} \ge \frac{2}{1+\epsilon}$.
	By taking $\epsilon \to 0$, we obtain $\mathrm{SPoA} \ge 2$.
	\qed
\end{proof}

\section{Players with k-lookahead}\label{sec:klookahead} 
This section focuses on the general case where players have $k$-lookahead.
We first prove a key lemma showing that each job can only contribute a certain amount to the makespan:
\begin{lemma}\label{lem:klook.main}
	$\Delta L(\left[ \ell:n \right]) \le \Delta L(\left[ \ell+1:n \right]) + p_{\ell} + \Delta L(K_{\ell})$ for $\ell = 1,2,\ldots,n$.
\end{lemma}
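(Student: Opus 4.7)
The plan is to fix an arbitrary initial load $\mathbf{D} \in \mathbb{R}^M_+$ and decompose the play of jobs $[\ell:n]$ into the single move of job $\ell$ followed by the play of $[\ell+1:n]$. Let $\sigma_\ell$ be the machine chosen by job $\ell$ via its $(k+1)$-level lookahead tree, and let $\mathbf{D}'$ be the load vector right after job $\ell$'s move; only coordinate $\sigma_\ell$ changes, with $D'_{\sigma_\ell} = D_{\sigma_\ell} + p_{\sigma_\ell,\ell}$. Applying the definition of $\Delta L([\ell+1:n])$ to the initial load $\mathbf{D}'$ gives
\[
  L_{\max}(\mathbf{D},[\ell:n]) - \|\mathbf{D}\|_\infty \;\le\; \bigl(\|\mathbf{D}'\|_\infty - \|\mathbf{D}\|_\infty\bigr) + \Delta L([\ell+1:n]),
\]
so the lemma reduces to the one-step estimate $\|\mathbf{D}'\|_\infty - \|\mathbf{D}\|_\infty \le p_\ell + \Delta L(K_\ell)$.

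To obtain this one-step estimate, the plan is to exploit the $k$-lookahead optimality of job $\ell$. Let $c(i)$ be the load job $\ell$ would experience on machine $i$ at the backward-induction outcome of its own $(k+1)$-level subtree were it to play $i$. Job $\ell$ picks $\sigma_\ell \in \arg\min_i c(i)$, so $c(\sigma_\ell) \le c(i^*)$ for any reference machine $i^*$; take $i^*$ to realize $p_{i^*,\ell}=p_\ell$. Because loads only grow during the subgame, $D'_{\sigma_\ell} \le c(\sigma_\ell) \le c(i^*)$, and the loads of other machines are unchanged, so $\|\mathbf{D}'\|_\infty \le \max\{\|\mathbf{D}\|_\infty,\,c(i^*)\}$.

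Finally, $c(i^*)$ is bounded by the makespan of the subgame in which job $\ell$ has already moved to $i^*$ and the $k$ jobs of $K_\ell$ subsequently play, starting from a load whose infinity norm is at most $\|\mathbf{D}\|_\infty + p_\ell$. Invoking the definition of $\Delta L(K_\ell)$ on that load gives $c(i^*) \le \|\mathbf{D}\|_\infty + p_\ell + \Delta L(K_\ell)$. Combining the pieces yields $L_{\max}(\mathbf{D},[\ell:n]) - \|\mathbf{D}\|_\infty \le p_\ell + \Delta L(K_\ell) + \Delta L([\ell+1:n])$; taking a supremum over $\mathbf{D}$ closes the argument.

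The main obstacle I expect is justifying that $\Delta L(K_\ell)$ genuinely dominates the makespan produced \emph{inside} job $\ell$'s backward-induction subtree. In that subtree each successor implicitly treats job $\ell+k$ as the last player, whereas in the ``actual'' $k$-lookahead execution of the full game, each job in $K_\ell$ uses a full $k$-horizon that can reach beyond $\ell+k$. Arguing that both executions of $K_\ell$ are captured by the supremum defining $\Delta L(K_\ell)$ — so the same bound controls $c(i^*)$ and the tail term simultaneously — is the delicate step on which the telescoping use of this lemma will ultimately rest.
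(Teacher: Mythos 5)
Your proof is correct and follows essentially the same route as the paper's: decompose the play of $[\ell:n]$ into job $\ell$'s move plus the tail $\Delta L([\ell+1:n])$, lower-bound the new maximum load by job $\ell$'s anticipated cost on its chosen machine (your $c(\sigma_\ell)$ is the paper's $\widetilde{L}_{i^*}$), and upper-bound that anticipated cost by $\|\mathbf{D}\|_\infty + p_\ell + \Delta L(K_\ell)$ via the machine realizing $p_\ell$. The subtlety you flag at the end is genuine but is resolved the way you anticipate: $\Delta L(K_\ell)$ must be read as bounding the \emph{anticipated} play of $K_\ell$ (the backward-induction outcome of the truncated $(k+1)$-level tree, i.e.\ an SPE of the $k$-job subgame, which is exactly what the cited bounds control), while the actual $k$-lookahead continuation of the full game is charged entirely to the separate term $\Delta L([\ell+1:n])$, so no single supremum needs to cover both executions.
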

\begin{proof}
	For $\ell \in \left[ 1:n \right]$, given a job set $\left[ \ell:n \right] $ and an initial load vector $\mathbf{D}$.
	We define two notations regarding the decision of job $\ell$.
	One is the new initial load after job $\ell$ chooses machine $i \in M$:
	\[
		\widetilde{\mathbf{D}}^{\shortrightarrow i} = \mathbf{D} + ( \underbrace{0,\ldots,0}_{i-1},~p_{i,\ell},~\underbrace{0,\ldots,0}_{m-i} ) \,.
	\]
	The other one is the anticipated cost of job $\ell$ with a lookahead set $K_{\ell}$ if he/she chooses machine $i \in M$:
	\[
		\widetilde{L}_i = L_i(\widetilde{\mathbf{D}}^{\shortrightarrow i},K_{\ell}) \,.
	\]

	Without loss of generality, we suppose job $\ell$ chooses machine $i^*$.
	Thus, the makespan for the game of the set $\left[ \ell:n \right]$ of players and the initial load $\mathbf{D}$ is
	\begin{align*}
		L_{\max}(\mathbf{D},\left[ \ell:n \right] ) & = L_{\max}(\widetilde{\mathbf{D}}^{\shortrightarrow i^*},\left[ \ell+1:n \right] ) \\
		& \le \dmax[\widetilde{\mathbf{D}}^{\shortrightarrow i^*}] + \Delta L (\left[ \ell+1:n \right]) \,.
	\end{align*}
	
	We first discuss a trivial case, where $\dmax[\widetilde{\mathbf{D}}^{\shortrightarrow i^*}] = \dmax[\mathbf{D}]$,
	that is, $\dmax[\mathbf{D}]$ will not increase after job $\ell$ chooses machine $i^*$.
	This case indicates that 
	\[
		L_{\max}(\mathbf{D},\left[ \ell:n \right] ) \le \dmax[\mathbf{D}] + \Delta L (\left[ \ell+1:n \right]) \,.
	\]
	The lemma is proved, since the inequality holds for any $\mathbf{D}$:
	\begin{align*}
		\Delta L(\left[ \ell:n \right]) & = \sup\left\{ L_{\max}(\mathbf{D},\left[ \ell:n \right])- \dmax{}: \mathbf{D}\in \mathbb{R}^M_+ \right\} \\
		& \le   \Delta L (\left[ \ell+1:n \right]) \,.
	\end{align*}

	Then we discuss the case $\dmax[\widetilde{\mathbf{D}}^{\shortrightarrow i^*}] > \dmax[\mathbf{D}]$.
	Because the increment of $\dmax[\mathbf{D}]$ is due to job $\ell$ chooses machine $i^*$,
	we know that $\dmax[\widetilde{\mathbf{D}}^{\shortrightarrow i^*}] = D_{i^*} + p_{i^*,\ell}$.
	This indicates that the anticipated cost of job $\ell$ is at least $\dmax[\widetilde{\mathbf{D}}^{\shortrightarrow i^*}]$, i.e.,
	\[
		\widetilde{L}_{i^*} \ge \dmax[\widetilde{\mathbf{D}}^{\shortrightarrow i^*}] \,.
	\]

	Let's focus on the moment when job $\ell$ makes decision.
	Job $\ell$ knows the initial load $\mathbf{D}$ and the lookahead set $K_{\ell}$.
	Thus, the anticipated cost of job $\ell$ for choosing any machine $i\in M$ is
	\[
		\widetilde{L}_{i} = L_{i}(\widetilde{\mathbf{D}}^{\shortrightarrow i},K_{\ell}) = \dmax[\widetilde{\mathbf{D}}^{\shortrightarrow i}] + \Delta L_{i} (\widetilde{\mathbf{D}}^{\shortrightarrow i},K_{\ell}) \,.
	\]
	Since job $\ell$ chooses machine $i^*$, it holds that
	\[
		\widetilde{L}_{i^*} \le \min_{i \in M} \left\{ \widetilde{L}_{i} \right\} = \min_{i \in M} \left\{ \dmax[\widetilde{\mathbf{D}}^{\shortrightarrow i}] + \Delta L_{i} (\widetilde{\mathbf{D}}^{\shortrightarrow i},K_{\ell}) \right\} \le \min_{i \in M} \left\{ \dmax[\widetilde{\mathbf{D}}^{\shortrightarrow i}] \right\} + \Delta L (K_{\ell}) \,.
	\]
	According to the definition of $\widetilde{\mathbf{D}}^{\shortrightarrow i}$, we know that 
	\[
		\dmax[\widetilde{\mathbf{D}}^{\shortrightarrow i}] \le \dmax[\mathbf{D}] + p_{i,\ell} \,.
	\]
	Thus it holds that
	\[
		\widetilde{L}_{i^*} \le \dmax[\mathbf{D}] + \min_{i \in M} \{p_{i,\ell}\} + \Delta L (K_{\ell}) = \dmax[\mathbf{D}] + p_{\ell} + \Delta L (K_{\ell}) \,.
	\]
	Since $\widetilde{L}_{i^*} \ge \dmax[\widetilde{\mathbf{D}}^{\shortrightarrow i^*}]$, it follows that
	\[
		\dmax[\widetilde{\mathbf{D}}^{\shortrightarrow i^*}] \le \dmax[\mathbf{D}] + p_{\ell} + \Delta L (K_{\ell})
	\]
	Therefore we have
	\begin{align*}
		L_{\max}(\mathbf{D},\left[ \ell:n \right] ) & \le \dmax[\widetilde{\mathbf{D}}^{\shortrightarrow i^*}] + \Delta L (\left[ \ell+1:n \right]) \\
		& \le  \dmax[\mathbf{D}] + p_{\ell} + \Delta L (K_{\ell}) + \Delta L (\left[ \ell+1:n \right]) \,.
	\end{align*}
	Since the inequality holds for any $\mathbf{D}$, we obtain
	\begin{align*}
		\Delta L(\left[ \ell:n \right]) & = \sup\left\{ L_{\max}(\mathbf{D},\left[ \ell:n \right])- \dmax{}: \mathbf{D}\in \mathbb{R}^M_+ \right\} \\
		& \le  p_{\ell} + \Delta L (K_{\ell}) + \Delta L (\left[ \ell+1:n \right]) \,,
	\end{align*}
	which completes the proof.
	\qed
\end{proof}

\begin{theorem}\label{thm:klookahead.2&mmachines}
	For the sequential scheduling game where players have $k$-lookahead, the SPoA is at most $O(k^2)$ for the two unrelated machines case, and at most $O(2^k \cdot \min\{mk,n\})$ for the $m$ unrelated machines case.
\end{theorem}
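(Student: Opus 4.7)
The plan is to apply Lemma~\ref{lem:klook.main} twice: first to telescope across the full job sequence, then recursively inside each lookahead block. Iterating Lemma~\ref{lem:klook.main} from $\ell=1$ to $\ell=n$ (using $\Delta L(\emptyset)=0$) yields
\[
	L_{\max}(\mathbf{0}, N) \;\le\; \Delta L([1:n]) \;\le\; \sum_{\ell=1}^n p_\ell \;+\; \sum_{\ell=1}^n \Delta L(K_\ell),
\]
and the standard estimates $\sum_\ell p_\ell \le m \cdot OPT(N)$ (since the optimum schedule must pay at least $p_\ell$ for each job $\ell$) and $\max_j p_j \le OPT(N)$ provide the link to the optimum.

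For $\Delta L(K_\ell)$ with $|K_\ell|\le k$, I re-apply Lemma~\ref{lem:klook.main} inside the subgame, where the first player's lookahead is truncated to at most $|J|-1$ successors, so that both subsets on the right-hand side have strictly smaller size. Letting $T(s)$ denote the worst-case ratio $\Delta L(J)/\max_{j\in J} p_j$ over $|J|=s$, the lemma gives $T(s)\le 2T(s-1)+1$ with $T(0)=0$, hence $T(s)\le 2^s-1$ and $\Delta L(K_\ell)\le (2^k-1)\cdot OPT(N)$. Substituting back yields $L_{\max}(\mathbf{0},N) \le \bigl(m + n(2^k-1)\bigr)\, OPT(N)$, which already establishes the $O(2^k\cdot n)$ half of the $m$-machine bound.

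The sharper halves---$O(2^k\cdot mk)$ for $m$ machines (dominant when $n \gg mk$) and $O(k^2)$ for two machines---require a refined accounting, since the naive telescoping above charges a full lookahead contribution to every single index $\ell$ even though $K_\ell$ and $K_{\ell+1}$ overlap in $k-1$ jobs. My plan is a breakpoint decomposition in the spirit of Algorithm~\ref{alg:setn}: partition $[1:n]$ into epochs separated by indices at which the makespan-attaining machine freshly changes, argue each epoch has length $O(k)$, and charge the epoch to a single machine. For $m$ machines this caps the number of nontrivial lookahead terms at $O(mk)$, yielding the $O(mk\cdot 2^k)$ estimate; for two machines, directly generalizing the breakpoint induction of Lemma~\ref{lem.1look} to $k$-deep backward induction should both sharpen $\Delta L(K_\ell)$ to $O(k)\,p_{\max}$ and cap the number of contributing blocks by $O(k)$, giving $O(k^2)$. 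The main technical obstacle is propagating a $k$-level backward induction across block boundaries: unlike the one-lookahead setting, a player's choice can depend on interactions $k$ steps into the next epoch, and these cross-block dependencies must be absorbed without loosening the bound.
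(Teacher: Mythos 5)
Your telescoping of Lemma~\ref{lem:klook.main} into $\Delta L([1:n])\le\sum_\ell p_\ell+\sum_\ell \Delta L(K_\ell)$ is exactly the paper's first step, and your recursive re-application of the lemma inside each subgame to get $\Delta L(K_\ell)\le(2^k-1)\max_{j\in K_\ell}p_j$ is a legitimate self-contained substitute for the bound the paper imports from \citet{Bilo2015}; combined with $OPT(N)\ge\max_j p_j$ this does give the $O(n2^k)$ half. But the other two bounds are where your proposal has a genuine gap: you explicitly leave the $O(mk\cdot 2^k)$ and $O(k^2)$ cases as a plan (an epoch/breakpoint decomposition ``in the spirit of Algorithm~\ref{alg:setn}'') with an acknowledged unresolved obstacle about cross-block dependencies, and that plan is also aimed at the wrong target. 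You are trying to limit the \emph{number} of indices $\ell$ whose lookahead term contributes, which is delicate precisely because of the overlap you identify. The paper instead keeps all $n$ terms and controls the overlap by a trivial double-counting: it bounds each $\Delta L(K_\ell)$ by a multiple of $\sum_{j\in K_\ell}p_j$ (not of $\max_{j\in K_\ell}p_j$), so that
\[
\sum_{\ell=1}^{n}\Delta L(K_\ell)\;\le\; c_k\sum_{\ell=1}^{n}\sum_{j\in K_\ell}p_j\;\le\; c_k\, k\sum_{j=1}^{n}p_j,
\]
because each job lies in at most $k$ of the windows $K_\ell$; dividing by $OPT(N)\ge\sum_j p_j/m$ then absorbs the fact that every index contributes. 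With $c_k=2^k$ (the perfect-rationality bound of \citet{Leme2012} applied to the $k$-player subgame) this yields $O(mk2^k)$ with no decomposition at all.

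For two machines the same double-counting works, but you are missing the one ingredient that makes $O(k^2)$ possible: a bound on $\Delta L(J)$ for a \emph{perfect-rationality} subgame on two machines that is linear in $|J|$, namely $\Delta L(K_\ell)\le(k-1)\sum_{j\in K_\ell}p_j$, which the paper takes from \citet[Theorem~4]{Giessler2016}. Your recursion $T(s)\le 2T(s-1)+1$ only gives an exponential $c_k=2^k-1$ per window, so no amount of block accounting on top of it can reach $O(k^2)$; and generalizing the breakpoint induction of Lemma~\ref{lem.1look} to $k$-deep backward induction---your stated fallback---is precisely the hard step you have not carried out. So the proposal proves only the $O(n2^k)$ piece; the $O(mk2^k)$ and $O(k^2)$ claims remain unestablished as written.
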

\begin{proof}
	According to Lemma~\ref{lem:klook.main}, it follows that
	\begin{equation}\label{eq:SPoA.klook.mmachines.deltaL}
		\Delta L([1:n]) \le \Delta L([2:n]) +p_1 + \Delta L (K_1) \le \ldots \le 
		\sum_{j=1}^n p_j + \sum_{j=1}^n \Delta L (K_j) \,.
	\end{equation}
	For the two unrelated machines case, we can know from \citet[Theorem~4]{Giessler2016} that
	\[
		\Delta L (K_j) \le (k-1) \sum_{j\in K_j} p_j \,.
	\]
	So we have
	\begin{align*}
		\Delta L([1:n]) & \le \sum_{j=1}^n p_j + \sum_{j=1}^n \Delta L (K_j) \\
		& \le \sum_{j=1}^n p_j +  (k-1) \left (\sum_{j \in K_1} p_j + \sum_{j \in K_2} p_j + \ldots + \sum_{j \in K_n} p_j \right) \\
		& \le \sum_{j=1}^n p_j +  k(k-1) \sum_{j=1}^n p_j \\ 
		& \le (k^2 - k +1) \sum_{j=1}^n p_j \,.
	\end{align*}
	Therefore, it holds that
	\[
		\mathrm{SPoA} \le \frac{\Delta L (N)}{OPT(N)} \le \frac{(k^2 - k +1) \sum_{j=1}^n p_j}{\sum_{j=1}^n p_j/2} = 2 (k^2 - k +1) = O (k^2) \,,
	\]
	for 2 unrelated machines.

	As for the $m$ unrelated machines case, according to \citet[Theorem 4]{Leme2012} and \citet[Theorem 5]{Bilo2015}, we get two upper bounds for $\Delta L (K_j)$:
	\begin{gather}
		\label{eq:SPoA.klook.mmachines.sump}
		\Delta L (K_j) \le 2^k \sum_{j\in K_j} p_j \,,\\
		\label{eq:SPoA.klook.mmachines.maxp}
		\Delta L (K_j) \le 2^k \max_{j\in K_j} p_j \,.
	\end{gather}
	On one hand, by \eqref{eq:SPoA.klook.mmachines.deltaL}, \eqref{eq:SPoA.klook.mmachines.sump} and $OPT(N) \ge \sum_{j\in N} p_j /m$, we have
	\begin{align*}
		\mathrm{SPoA} & \le \frac{\Delta L (N)}{OPT(N)} \\
		& \le \frac{\sum_{j\in N} p_j + \sum_{j\in N} \Delta L (K_j)}{\sum_{j\in N} p_j /m} \\
		& \le \frac{\sum_{j\in N} p_j + k 2^k \sum_{j\in N} p_j}{\sum_{j\in N} p_j /m} \\
		& = m k 2^k + m \\
		& = O(mk2^k)  \,.
	\end{align*}
	On the other hand, by \eqref{eq:SPoA.klook.mmachines.deltaL}, \eqref{eq:SPoA.klook.mmachines.maxp} and $OPT(N) \ge \max_{j\in N} p_j$, we have
	\begin{align*}
		\mathrm{SPoA} & \le \frac{\Delta L (N)}{OPT(N)} \\
		& \le \frac{\sum_{j\in N} p_j + \sum_{j\in N} \Delta L (K_j)}{OPT(N)} \\
		& \le \frac{\sum_{j\in N} p_j}{OPT(N)} + \frac{n 2^k \max_{j\in N} p_j}{OPT(N)} \\
		& \le \frac{\sum_{j\in N} p_j}{\sum_{j\in N} p_j /m} + \frac{n 2^k \max_{j\in N} p_j}{\max_{j\in N} p_j} \\
		& = m + n 2^k \\
		& = O(n 2^k)  \,.
	\end{align*}
	Therefore, we obtain $\mathrm{SPoA} = O(2^k \cdot \min\{mk,n\})$ for $m$ unrelated machines.
	\qed
\end{proof}

\section{Simple-minded players} 
\label{sec:simple_minded_players}

A simple-minded player makes decision only via simple calculations.
When a simple-minded player $j$ makes decision, job $j$ will select a machine with minimum anticipated load assuming that all the follow-up players will simply choose machines with minimum processing time. 
We show in this section that the SPoA is exactly $m$, the number of machines.

\paragraph*{Additional notation.}
For any job $j$ and machine $i$, we define
\[
	A_i(j) = D_i(j) + P_i([j+1:n]) \,,
\]
where $D_i(j)$ is the initial load of machine $i$ due to the first $j$ jobs, $P_i([j+1:n])$ is the total processing time of the jobs who are assumed by job $j$ to choose machine $i$ (i.e. the jobs have minimum processing time on machine $i$).
Note that the definition of $P_i(\cdot)$ avoids the issue caused by a tie, since a job can assume the follow-up players to choose only one machine.

\begin{theorem}
	For the sequential scheduling game on $m$ unrelated machines where players are \emph{simple-minded}, $\mathrm{SPoA} \le m$.
\end{theorem}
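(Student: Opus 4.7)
The plan is to establish the stronger inequality $L_{\max}(\mathbf{0}, N) \le \sum_{j \in N} p_j$, which implies the theorem via the trivial bound $OPT(N) \ge (\sum_{j \in N} p_j)/m$ (since in any schedule every job contributes at least $p_j$ to the total load).

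I would prove the inequality by showing that the quantity $\max_i A_i(j)$ is monotonically non-increasing in $j$. Write $i^*_j$ for the fixed argmin machine of job $j$, so that $p_{i^*_j, j} = p_j$. Unfolding the definition $A_i(j) = D_i(j) + P_i([j+1:n])$ gives the step-$j$ update rule: if $i_j = i^*_j$ then $A_i(j) = A_i(j-1)$ for every $i$; while if $i_j \neq i^*_j$ then
\[
A_{i_j}(j) = A_{i_j}(j-1) + p_{i_j, j}, \quad A_{i^*_j}(j) = A_{i^*_j}(j-1) - p_j,
\]
with every other coordinate unchanged. The one subtlety in this update is the identity for $A_{i^*_j}$: advancing the time index removes job $j$ from the ``predicted future'' contribution to $P_{i^*_j}$, and because $j$ chose some other machine it is not reinserted into $D_{i^*_j}$.

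The heart of the argument is the second case. A simple-minded job $j$ minimizes his anticipated cost, which by construction equals $A_{i_j}(j)$, and his counterfactual cost had he chosen $i^*_j$ equals $A_{i^*_j}(j-1)$ — the cancellation $p_{i^*_j, j} = p_j$ makes job $j$'s contributions to the ``realized'' and ``predicted'' buckets on $i^*_j$ identical, so the two expressions for $A_{i^*_j}$ coincide. Minimality therefore gives
\[
A_{i_j}(j) \;\le\; A_{i^*_j}(j-1) \;\le\; \max_k A_k(j-1),
\]
while every other coordinate either stays fixed or decreases. This proves the invariant $\max_i A_i(j) \le \max_i A_i(j-1)$.

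Iterating from $j = 0$ to $j = n$, and using $A_i(n) = L_i$ together with $A_i(0) = P_i([1:n])$, yields
\[
\max_i L_i \;\le\; \max_i P_i([1:n]) \;\le\; \sum_i P_i([1:n]) \;=\; \sum_{j \in N} p_j,
\]
which closes the theorem. The main obstacle I anticipate is just writing the update rule and counterfactual-cost identity cleanly from the definitions; once those are in hand, minimality together with a one-line case analysis finishes the argument.
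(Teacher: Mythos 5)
Your proposal is correct and follows essentially the same route as the paper: the same potential $A_i(j)$, the same monotonicity invariant $\max_i A_i(j) \le \max_i A_i(j-1)$ proved by the same two-case analysis on whether job $j$ picks its argmin machine, and the same conclusion via $OPT(N) \ge \sum_{j\in N} p_j/m$. Your phrasing of the Case~2 comparison (anticipated cost $A_{i_j}(j)$ versus counterfactual cost $A_{i^*_j}(j-1)$) is in fact a slightly cleaner statement of the minimality inequality than the one printed in the paper, but the argument is the same.
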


\begin{proof}
	We claim that $A_{\max} (\ell) \le A_{\max} (\ell-1)$ for $\ell = 1,2,\ldots,n$, where $A_{\max} (j) = \max_{i\in M} A_i(j)$.
	If the claim is true, we can obtain $A_{\max} (n) \le A_{\max} (0)$.
	Since $A_{\max} (n) = L_{\max}(\mathbf{0},N)$, $\sum_{j\in N} A_j (0) = \sum_{j\in N} p_j$ (by definition) and $OPT(N) \ge \sum_{j\in N} p_j / m$, it holds that
	\[
	 	\mathrm{SPoA} 	= 	\frac{L_{\max}(\mathbf{0},N)}{OPT(N)} 
					 	= 	\frac{A_{\max} (n)}{OPT(N)} 
					 	\le \frac{A_{\max} (0)}{OPT(N)}
					 	\le \frac{\sum_{j\in N} A_j (0)}{OPT(N)}
					 	\le \frac{\sum_{j\in N} p_j}{\sum_{j\in N} p_j / m} 
					 	= 	m \,,
	\]
	meaning the theorem holds.

	We then prove the above claim.
	For an arbitrary $\ell$, we assume machine $i^*$ has the minimum processing time for job $\ell$, i.e., $p_{\ell} = p_{i^*,\ell}$.

	\paragraph{Case 1.} If job $\ell$ chooses machine $i^*$, it follows that
	\begin{align*}
		A_{i^*}(\ell)	& = D_{i^*}(\ell) + P_{i^*}([\ell+1:n]) \\
						& = D_{i^*}(\ell - 1) + p_{\ell} + P_{i^*}([\ell+1:n]) \\
						& = D_{i^*}(\ell - 1) + P_{i^*}([\ell:n]) \\
						& = A_{i^*}(\ell - 1) \,.
	\end{align*}
	Since job $\ell$ chooses machine $i^*$, for any other machine $i \in M \backslash\{i^*\}$, the decision of job $\ell$ does not change $A_{i}(\ell - 1)$, that is, $A_{i}(\ell) = A_{i}(\ell - 1)$.
	Therefore we have $A_{\max}(\ell) = A_{\max}(\ell - 1)$.

	\paragraph{Case 2.} If job $\ell$ chooses machine $i' \neq i^*$, this means machine $i'$ is a better choice than machine $i^*$ for job $\ell$, i.e., $A_{i'}(\ell) + p_{i',\ell} \le A_{i^*}(\ell)$.

	For machine $i^*$, it holds that
	\begin{align*}
		A_{i^*}(\ell)	& = D_{i^*}(\ell) + P_{i^*}([\ell+1:n]) \\
						& = D_{i^*}(\ell - 1) + P_{i^*}([\ell+1:n]) \\
						& = D_{i^*}(\ell - 1) + P_{i^*}([\ell:n]) - p_{\ell} \\
						& = A_{i^*}(\ell - 1) - p_{\ell} \,.
	\end{align*}
	Thus, we obtain that $A_{i'}(\ell) \le A_{i^*}(\ell - 1)$ and $A_{i^*}(\ell) \le A_{i^*}(\ell - 1)$.
	Since for any other machine $i \in M \backslash\{i^*,i'\}$, the decision of job $\ell$ does not change $A_{i}(\ell - 1)$ (i.e., $A_{i}(\ell) = A_{i}(\ell - 1)$), we have $A_{\max}(\ell) \le A_{\max}(\ell - 1)$.

	The above two cases show that the claim is true, completing the proof.
	\qed
\end{proof}

\begin{theorem}
	For the sequential scheduling game on $m$ unrelated machines where players are \emph{simple-minded}, $\mathrm{SPoA} \ge m$.
\end{theorem}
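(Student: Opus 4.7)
The plan is to exhibit a family of instances with $m$ jobs and $m$ machines for which the worst-case simple-minded SPE has makespan arbitrarily close to $m$, while the optimum makespan is $1$; this forces $\mathrm{SPoA} \ge m$ and, together with the matching upper bound, completes the theorem. The instance I propose is: job $1$ has $p_{1,1}=1$ and $p_{i,1}=m$ for every $i \in [2:m]$, while each later job $j \in [2:m]$ is ``symmetric'' with $p_{i,j}=1$ on every machine. For these later jobs all machines are tied minima, and I will adopt the adversarial tie-breaking rule that each such job's min-processing-time machine is taken to be machine $1$ --- so $P_1(K)=|K|$ and $P_i(K)=0$ for every $i \ne 1$ and $K \subseteq [2:m]$.

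With these assumptions, I would first compute job $1$'s simple-minded anticipation: the anticipated cost is $0 + 1 + (m-1) = m$ on machine $1$ (its own weight plus a unit of anticipated load from each of the $m-1$ successors on machine $1$) and $0 + m + 0 = m$ on every other machine. All $m$ best responses are tied, and by adversarial tie-breaking job $1$ lands on machine $m$, creating actual load $m$ there. I would then argue by a short induction that each job $j \in [2:m-1]$ takes a fresh machine in $[2:m-1]$ at cost $1$: at its turn the anticipated cost on machine $1$ is $0 + 1 + (m-j) > 1$, each already-used machine in $[2:m-1]$ costs $2$, machine $m$ costs $m+1$, whereas an empty machine in $[2:m-1]$ costs exactly $1$. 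The last mover, job $m$, faces $P_i\equiv 0$, so it compares $1$ on machine $1$ against $2$ on each used machine in $[2:m-1]$ and $m+1$ on machine $m$, and lands on machine $1$ at cost $1$.

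Finally, I would observe that the resulting SPE loads are $1,1,\ldots,1,m$ across machines $1,\ldots,m$, so the SPE makespan equals $m$. The optimum places job $1$ on machine $1$ and job $j$ on machine $j$ for $j \ge 2$, so every machine carries load exactly $1$ and $OPT = 1$. Hence $\mathrm{SPoA} \ge m$. Ties can be removed by arbitrarily small perturbations (e.g.\ setting $p_{m,1} = m - \delta$ and $p_{1,j} = 1 - \delta^2$ for $j \ge 2$) without affecting the limiting ratio, which shows that the bound also holds along a sequence of tie-free instances.

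The main obstacle I expect is the careful bookkeeping of the cost comparisons for the middle jobs $j \in [2:m-1]$: one has to verify that the anticipated pile on machine $1$, together with the real load $m$ that job $1$ has already dumped on machine $m$, is enough to keep each subsequent player off both machine $1$ and machine $m$, and that among the remaining machines the unused ones strictly dominate the used ones. These are all elementary linear inequalities in $m$ and $j$, but they need to be written out explicitly so that the described schedule is confirmed to be a valid simple-minded SPE rather than merely an ad hoc path in the game tree.
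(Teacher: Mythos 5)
Your proposal is correct and takes essentially the same approach as the paper: both exhibit an explicit $m$-job, $m$-machine instance in which the anticipated pile-up of $m-1$ future unit jobs on machine $1$ inflates its apparent cost to $m$, thereby luring a player onto a machine with true processing time $\approx m$, with small perturbations to break ties. The only cosmetic difference is that in your instance only job $1$ is fooled (the rest are unit jobs), whereas the paper uses a cascade in which job $j$ is pushed onto machine $j+1$ with processing time $\approx m+1-j$; both yield makespan $\to m$ against $OPT=1$.
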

\begin{proof}
	We first give an example in Table~\ref{tab:instance_for_4_machines} to illustrate a game in which $\mathrm{SPoA} = m$ for $m=4$.
	To avoid breaking ties, we add $\epsilon$ terms so that ties never occur.
	In the example, gray boxes represent the choices of jobs and the optimal solution is given in \textbf{bold}.
	More specifically, job 1 thinks the follow-up jobs will choose machines with minimum processing time (i.e. jobs 2-3 will choose machine 1 with processing time $1- \epsilon$).
	Thus machine 2 is the best choice for job 1, since the anticipated cost $4-4 \epsilon$ of choosing machine 1 is higher than $4-5 \epsilon$ of choosing machine 2.
	Likewise, jobs 2 will choose machine 3, job 3 will choose machine 4, and the last job 4 will choose machine 1.
	The optimal solution is obviously that each job $i$ chooses machine $i$.
	Therefore, we can obtain $\mathrm{SPoA} = {4}$ by taking $\epsilon \rightarrow 0$.

	\begin{table}[tb]
	  \centering
	  \caption{An instance for 4 machines}
	    \begin{tabular}{c|c|c|c|c}
	    \hline

	    \hline
	          & job 1     & job 2     & job 3     & job 4 \\
	    \hline
	    machine 1     & $\mathbf{1-\epsilon}$ & $1-\epsilon$ & $1-\epsilon$ & \take{$1-\epsilon$} \\
	    \hline
	    machine 2     & \take{$4-5\epsilon$} & $\mathbf{1}$     & $\infty$     & $\infty$ \\
	    \hline
	    machine 3     & $\infty$     & \take{$3-4\epsilon$} & $\mathbf{1}$     & $\infty$ \\
	    \hline
	    machine 4     & $\infty$     & $\infty$     & \take{$2-3\epsilon$} & $\mathbf{1}$ \\
	    \hline

	    \hline
	    \end{tabular}%
	  \label{tab:instance_for_4_machines}%
	\end{table}%

	For the general $m$ machines case, we can provide an instance of $m$ jobs in a similar fashion:
	\begin{enumerate}
		\item the processing time $p_{1,j}$ for each $j \in \left\{ 1,2,\ldots,m \right\}$ on machine 1 is $1- \epsilon$ (also the minimum processing time);
		\item the processing time $p_{j,j}$ for job $j \in \left\{ 2,3,\ldots,m \right\}$ on machine $j$ is $1$;
		\item the processing time $p_{j+1,j}$ for job $j \in \left\{ 1,2,\ldots,m-1 \right\}$ on machine $j+1$ is $m+1-j-(m+2-j)\epsilon$;
		\item all the undefined processing times are $\infty$.
	\end{enumerate}

	In this instance, the decisions of the jobs are: job $j$ chooses machine $j+1$ for $j = 1,2,\ldots,m-1$, and job $m$ chooses machine 1, resulting a makespan of $m-(m+1)\epsilon$.
	However, the optimal makespan is $1$ achieved by each job $j$ chooses machine $j$.
	By taking $\epsilon \rightarrow 0$, we can obtain $\mathrm{SPoA} = {m}$.
	Since we can find an instance that $\mathrm{SPoA} = {m}$, it follows that the SPoA is at least $m$.
	\qed
\end{proof}

\section{Conclusion} 
\label{sec:conclusion}

One of our main contributions to the area of algorithmic game theory is the reconsideration of ``perfect rationality'' assumption for the players.
This work helps, in some degree, to understand why some games in reality perform much better than the theoretical prediction.
As an example, the inefficiency of the subgame-perfect equilibrium for scheduling game on two unrelated machines is unbounded (for unbounded number of players).
However it might not be so bad in realty, since the real world players might only have bounded rationality.
Our results just explain this phenomenon in a theoretical way.
We believe this work takes a promising step in further understanding the role that bounded rationality plays in algorithmic game theory.

 \bibliographystyle{apalike}
 \bibliography{mybib}

\end{document}